\documentclass[11pt]{article}
\usepackage{amssymb, latexsym, mathtools, amsthm}
\pagestyle{plain}
\begingroup
    \makeatletter
    \@for\theoremstyle:=definition,remark,plain\do{%
        \expandafter\g@addto@macro\csname th@\theoremstyle\endcsname{%
            \addtolength\thm@preskip\parskip
            }%
        }
\endgroup
\usepackage{enumerate}
\usepackage{pgf,tikz}
\usepackage{pdfsync}
\usepackage{txfonts}
\usepackage[T1]{fontenc}
\usepackage{booktabs}
 \usepackage{graphicx}
\definecolor{dnrbl}{rgb}{0,0,0.3}
\definecolor{dnrgr}{rgb}{0,0.3,0}
\definecolor{dnrre}{rgb}{0.5,0,0}
\usepackage[colorlinks=true, citecolor=dnrgr, linkcolor=dnrre, urlcolor=dnrre] {hyperref}
\usepackage{xcolor}
\usepackage{parskip}
\usepackage{tabularx, colortbl}

\usepackage{geometry}
 \geometry{
 left=26mm,
 right=26mm,
 top=31mm,
 bottom=31mm,  
 footskip=1.2cm
 } 

 \usepackage[scriptsize, up]{caption}
\usepackage{pgf,tikz}
\usetikzlibrary{decorations, decorations.pathmorphing}
\usetikzlibrary {shapes}

\theoremstyle{plain}
\newtheorem{thm}{Theorem}[section]
\newtheorem{prop}[thm]{Proposition}
\newtheorem{lem}[thm]{Lemma}

\newtheorem{defi}[thm]{Definition}
\numberwithin{equation}{subsection}
\makeatletter

\let\c@table\c@figure
\makeatother


\newcommand{\Nat}{\mathbb{N}}

\newcommand{\restr}{\upharpoonright}  
\DeclarePairedDelimiter{\ceil}{\lceil}{\rceil}

\newcommand{\bigo}[1]{\mathop{\bf O}\/\left({#1}\right)}

\newcommand{\FSW}{Figueira, Stephan, and Wu\ }

\newcommand{\KS}{Ku{\v{c}}era and Slaman\ }
\newcommand{\CHKW}{Calude, Hertling, Khoussainov and Wang\ }

\newcommand{\DHMN}{Downey, Hirschfeldt Miller and Nies\ }
\newcommand{\ml}{Martin-L\"{o}f }
\newcommand{\pz}{$\Pi^0_1$\ }

\newcommand{\eg}{e.g.\ }
\newcommand{\ie}{i.e.\ }
\newcommand{\ce}{c.e.\ }

\newcommand{\lce}{left-c.e.\ }

\newcommand{\lcern}{left-c.e.\ real}

\newcommand{\lcen}{left-c.e.}

\newcommand{\pf}{prefix-free }

\renewenvironment{abstract}
 { \normalsize
  \list{}{
    \setlength{\leftmargin}{.0cm}%
    \setlength{\rightmargin}{\leftmargin}%
    }%
  \item {\bf \abstractname.} \relax}
 {\endlist}

\title{Computing halting probabilities from other halting probabilities
\thanks{Barmpalias was supported by the 
1000 Talents Program for Young Scholars from the Chinese Government, grant no.\ D1101130.
Additional support was received by
the Chinese Academy of Sciences (CAS) and the Institute of Software of the CAS.
Lewis-Pye was supported by a Royal Society University 
Research Fellowship.}}
\author{George Barmpalias  \and Andrew Lewis-Pye}
\date{\today}
\begin{document}
\maketitle
\begin{abstract}
The halting probability of a Turing machine is the probability that the machine will halt if it starts with a random
stream written on its one-way input tape. When the machine is universal, this probability is referred to as
Chaitin's omega number, and is the most well known example of a real which is random in the sense of Martin-L\"{o}f. Although omega numbers depend on the underlying universal Turing machine,
they are robust in the sense that they all have the same Turing degree,
namely the degree of the halting problem.
This means that, given two universal prefix-free machines $U,V$, 
the halting probability $\Omega_U$
of $U$ computes
the halting probability  $\Omega_V$ of $V$. 
If this computation uses at most the first $n+g(n)$ bits of $\Omega_U$ for the computation of the first $n$ bits of
$\Omega_V$, we say that
$\Omega_U$
computes
$\Omega_V$ with redundancy $g$.

In this paper we  give precise bounds on the redundancy growth rate that is 
generally required for the computation of an omega number from another omega number.
We show that for each $\epsilon>1$, any pair of omega numbers compute each other with redundancy 
$\epsilon\log n$. On the other hand, this is not true for $\epsilon=1$. In fact, we show
that for each omega number $\Omega_U$
there exists another omega number which is not computable from 
$\Omega_U$ with redundancy $\log n$.
This latter result improves an older result of Frank Stephan.
\end{abstract}
\vfill
\noindent{\bf George Barmpalias}\\[0.5em]
\noindent State Key Lab of Computer Science, 
Institute of Software, Chinese Academy of Sciences, Beijing, China.
School of Mathematics, Statistics and Operations Research,
Victoria University of Wellington, New Zealand.\\[0.2em] \textit{E-mail:} \texttt{barmpalias@gmail.com}.
\textit{Web:} \texttt{\href{http://barmpalias.net}{http://barmpalias.net}}
\vfill
\noindent{\bf Andrew Lewis-Pye}\\[0.5em]
\noindent Department of Mathematics,
Columbia House, London School of Economics, 
Houghton Street, London, WC2A 2AE, United Kingdom.\\[0.2em]
\textit{E-mail:} \texttt{\textcolor{dnrgr}{A.Lewis7@lse.ac.uk.}}
\textit{Web:} \texttt{\textcolor{dnrre}{http://aemlewis.co.uk}} 
 \vfill\thispagestyle{empty}
\clearpage


\section{Introduction}
Consider the following experiment, involving a Turing machine with a one-way
input tape. We turn on the machine, and whenever it tries to read the next bit from the input,
we give to it a random digit. What is the probability that the machine will halt at some point?
This is an experiment that Chaitin considered in \cite{MR0411829}. In the case of a universal
machine he called the probability $\Omega$ and showed that it is algorithmically random, in the 
sense of \ml \cite{MR0223179}.
Chaitin originally considered $\Omega$ for self-delimiting machines, \ie machines that
operate on instantaneous code, without any
out-of-band markers or special symbols that frame the words in the input tape.
The cumulative work of Solovay \cite{Solovay:75},
\CHKW \cite{Calude.Hertling.ea:01} and \KS \cite{Kucera.Slaman:01},
has  shown Chaitin's omega numbers  do not depend
significantly on many parameters of the universal machine.
In particular, the series of papers above showed that
a \lce real (\ie one which is the limit of a computable increasing sequence of rational numbers)
is the halting probability of a universal self-delimiting machine if and only if it is
\ml random (a good presentation of this work is given 
in \cite[Chapter 9]{rodenisbook}).

There are many other results that witness the robustness of the halting probability
and the similarity between different omega numbers. Solovay \cite{Solovay:75}, for example, 
showed that omega numbers are, in a specific sense, equally and maximally hard to approximate,
compared to other \lce reals.  Calude and Nies observed in \cite{Calude.Nies:nd} that
omega numbers are computable from each other, with computable bounds on the use of the oracle (i.e.\ computable bounds on the number of bits of the oracle tape required on each argument).
On the other hand, a number of incompatibility results are known which distinguish the halting probabilities
of different machines. 
\FSW \cite{jc/FigueiraSW06}
showed, for example,  that for each universal machine $U$ with halting probability $\Omega_{U}$
there exists a universal machine $V$
with halting probability $\Omega_{V}$
such that $\Omega_{U}$ and $\Omega_{V}$ 
have incomparable truth-table degrees.
 Frank Stephan
(see \cite[Section 6]{IOPORT.05678491} for a proof) showed that
for each universal machine $U$ there exists a universal machine $V$
such that $\Omega_{U}$ cannot
compute the first $n$ bits of $\Omega_{V}$ using only the first $n+\bigo{1}$ bits of
$\Omega_{U}$ as an oracle.
Tadaki \cite{Tadaki:2009:C9} gave a very interesting quantitative characterization of the equivalence between the initial segments of
$\Omega$ and the sets $A_n$ of the strings of length $n$ in the domain of the universal \pf machine.

\DHMN showed  \cite{MR2188515} that the Turing degree of $\Omega$ is not robust 
when the halting probability is relativized to an infinite oracle, even when two oracles differ
at only a finite number of bits. Such strong negative results do not only apply to relativized versions
of halting probabilities, but also to probabilities that concern more complicated properties of
a universal machine than mere halting. This was demonstrated by Barmpalias and Dowe in
\cite{Barmpalias3488}, who studied the probability that a machine 
remains universal even when random bits are prefixed in the input tape. This is known as the
{\em universality probability}, 
and it was shown that for different universal Turing machines the universality
probabilities can have different Turing degrees.
These negative results suggest that the apparent robustness of $\Omega$ stems from the
fact that it is the probability of a relatively simple property, namely halting, which is $\Sigma^0_1$.
Indeed, there is only one $\Sigma^0_1$ Turing degree which contains \ml random numbers, namely
the degree of the halting problem, but the same is not true for classes of higher arithmetical complexity.

In this paper we study the similarity of omega numbers in terms of the length of the initial segment of
an omega number $\Omega_0$ 
that is needed in order to compute the first $n$ bits of another omega number $\Omega_1$.
\begin{defi}[Redundancy]
If a real $\beta$ computes a real $\alpha$, and for each $n$ the computation of the first $n$ bits of
$\alpha$ uses at most the first $\lfloor n+g(n) \rfloor $ bits of the oracle $\beta$, we say that $\beta$
computes $\alpha$ with redundancy $g$.
\end{defi}

Our main result is a sharp estimation of the redundancy growth rate that is generally required
for the computation of one omega number from another, in terms of logarithms. Throughout this paper we shall write $\mbox{log} (n)$ in order to denote $ \mbox{log}_2 (n) $, i.e.\ we always work base 2. It will also be convenient to agree to the convention that $\mbox{log} (0)=0$.

\begin{defi} \label{funcs}
For $\epsilon\in \mathbb{R}$ with $\epsilon\geq 1$, we define $h_{\epsilon}(n)=\epsilon\cdot\log (n)$, and  $h^{\ast}_{\epsilon}(n)=\log (n)+\epsilon\cdot \log \log (n)$. 
\end{defi} 

\begin{thm}\label{uVwUvgKQIJ}
 If $\epsilon>1$ then
every omega number is computable from any other omega number with redundancy $h_{\epsilon}$.
If $\epsilon=1$ then given any omega number $\Omega$ there exists another omega number which is
not computable from $\Omega$ with redundancy $h_{\epsilon}$.
\end{thm}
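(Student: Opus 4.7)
The two parts of the theorem are proved by quite different techniques; I sketch each.

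\textbf{Positive direction.} To show $\Omega_U$ computes $\Omega_V$ with redundancy $h_\epsilon$ for $\epsilon>1$: given $\Omega_U \restr (n+k)$ with $k = \lfloor \epsilon \log n\rfloor$, enumerate $\mathrm{dom}(U)$ from below until the running sum $\Omega_U[s]$ matches the given $n+k$-bit prefix; then $\Omega_U - \Omega_U[s] < 2^{-(n+k)}$, so every halting $U$-program of length $\le n+k$ has been observed. Since $U$ simulates $V$ via a fixed prefix of length $c_V$, every halting $V$-program of length $\le n+k-c_V$ has been observed, yielding a lower approximation $\tau$ of $\Omega_V$ with $\Omega_V - \tau < 2^{-(n+k-c_V)}$. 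The subtle step is to extract the first $n$ bits of $\Omega_V$ from $\tau$: this requires $\Omega_V$ to be at distance more than $2^{-(n+k-c_V)}$ from the nearest dyadic of order $n$. I would derive this from Chaitin's incompressibility $K(\Omega_V \restr n) \ge n - O(1)$: a run of $m$ consecutive equal bits starting at position $n+1$ yields the compression $n+m - O(1) \le K(\Omega_V\restr n) + K(m) + O(1) \le n + K(n) + K(m) + O(1)$, whence $m \le \log n + O(\log\log n)$. For $\epsilon>1$, $\epsilon\log n - c_V$ eventually dominates $\log n + O(\log\log n)$, and the first $n$ bits are determined. Small $n$ can be handled by a hard-coded table absorbed into the reduction.

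\textbf{Negative direction.} Fix a universal prefix-free machine $W$ and build a universal prefix-free machine $V$ whose domain includes $\{1\sigma : \sigma \in \mathrm{dom}(W)\}$ (making $V$ universal and $\Omega_V = \tfrac{1}{2}\Omega_W + \beta$ for a left-c.e.\ real $\beta < \tfrac{1}{2}$ we control) together with auxiliary strings of the form $0\tau$ enumerated during a priority construction. For each $e$, requirement $R_e$ asserts that $\Phi_e^{\Omega_U}$ does not compute $\Omega_V$ with redundancy $\log n$.

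Strategy for $R_e$: reserve a sub-tree of the $0$-branch of Kraft weight at most $2^{-e-2}$, containing a sequence of witness strings $\tau_e^1,\tau_e^2,\ldots$ of lengths $n_e^k$ with $\sum_k 2^{-n_e^k} \le 2^{-e-2}$. At each stage, at the least $k$ not yet acted on, test whether
\[
\Phi_e^{\Omega_U[s] \restr (n + \log n)}(n) \;=\; \Omega_V[s]\restr n \quad \text{for } n = n_e^k;
\]
if equality holds, enumerate $\tau_e^k$ into $\mathrm{dom}(V)$ (pushing $\Omega_V$ by $2^{-n_e^k}$ and incrementing $\Omega_V\restr n_e^k$ past the current guess) and then increment $k$.

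\textbf{Main obstacle.} The crux is the verification. If $R_e$ fails then in the limit $\Phi_e^{\Omega_U \restr (n+\log n)}(n) = \Omega_V\restr n$ for every $n$, so once $\Omega_U[s]\restr(n_e^k + \log n_e^k)$ and $\Omega_V[s]\restr n_e^k$ have stabilised the match condition holds and we must have already acted at $n_e^k$; after acting, $\Omega_V\restr n_e^k$ has strictly increased, so $\Phi_e$ must subsequently re-match, forcing a change in $\Omega_U\restr(n_e^k + \log n_e^k)$. Turning this into a quantitative contradiction -- tracking the forced $\Omega_U$-movements across all $k$, showing that they cannot be accommodated within $\Omega_U \le 1$ given that a change at a deep level can in principle serve many requirements at once -- requires a careful choice of the witness lengths $(n_e^k)$ and a delicate priority ordering between different $R_e$. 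I expect this bookkeeping to be where the threshold $\log n$ makes its appearance: anything less would allow a positive reduction as in part (I), while anything more would give the $\Omega_U$-accountant too much room to balance his books.
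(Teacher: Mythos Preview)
Your argument via incompressibility of $\Omega_V$ is correct but takes a different route from the paper. The paper never looks at the target's bit structure; instead it builds a Solovay test on the \emph{source}. Working with approximations $(\alpha_s)$ to the target and $(\Omega_s)$ to the source, at any stage $s+1$ where some bit $n$ of $\alpha$ changes it enumerates $\Omega_{s+1}\restr_{\lfloor n+g(n)\rfloor}$ into the test. Since bit $n$ of $\alpha$ changes at most $2^{n-1}$ times, the total weight is bounded by $\sum_n 2^{-g(n)}$, so this is a Solovay test; randomness of $\Omega$ then says that after finitely many exceptions each change to $\alpha\restr_n$ is followed by a change to $\Omega\restr_{\lfloor n+g(n)\rfloor}$, which yields the reduction directly. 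The payoff of this approach is generality: it shows \emph{every} \lce real (not just omega numbers) is computable from $\Omega$ with redundancy $g$ whenever $\sum_n 2^{-g(n)}<\infty$, whereas your argument uses incompressibility of the target and so is specific to random targets.

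\textbf{Negative direction.} Here there is a genuine gap. You correctly identify the obstacle in your priority construction --- that a single movement of $\Omega_U$ can unwind many diagonalisations, and that your budget $\sum_k 2^{-n_e^k}\le 2^{-e-2}$ forces only $\sum_k 2^{-n_e^k}/n_e^k$ worth of oracle movement, which is finite and yields no contradiction --- but you do not resolve it, and it is not clear that the priority approach can be made to work along these lines.

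The paper avoids diagonalisation entirely. It simply takes $\beta=\Omega+\gamma$ for a fixed \emph{computable} real $\gamma=\sum_{i>c}2^{-d_i}$, where $d_i=t_i+\lfloor g(t_i)\rfloor+1$ and $(t_i)$ is a computable sequence with $t_i+g(t_i)<t_{i+1}$ satisfying two requirements: $\sum_i 2^{-g(t_i)}=\infty$ and $\sum_i 2^{t_i-t_{i+1}}<\infty$. (For $g=\log$ the paper exhibits such a sequence explicitly.) By the effective Borel--Cantelli lemmas applied to $\Omega$, the second condition ensures that for all large $k$ some bit of $\Omega$ in $(d_k,d_{k+1})$ is $0$, while the first condition ensures that for infinitely many $k$ all bits of $\Omega$ in $[t_k,d_k-1]$ are $1$. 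At any such $k$, adding $2^{-d_k}$ to $\Omega$ flips bit $t_k$ precisely when $\Omega(d_k)=1$; hence $\beta(t_k)$ determines $\Omega(d_k)$. If $\beta$ were computable from $\Omega$ with redundancy $g$, then $\Omega\restr_{t_k+g(t_k)}$ would determine $\beta(t_k)$ and hence $\Omega(d_k)=\Omega(t_k+\lfloor g(t_k)\rfloor+1)$ --- a partial computable predictor for the next bit of $\Omega$ that fires infinitely often, contradicting randomness. The whole argument is a single $\beta$, no requirements, no injury; the role of $\log n$ enters only through the existence of the sequence $(t_i)$.
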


 Our result extends an older result of Frank Stephan
(see \cite[Section 6]{IOPORT.05678491} for a proof) which says that two omega numbers do not always compute each other with constant redundancy.
Our proof of Theorem \ref{uVwUvgKQIJ} involves effective measure-theoretic tools like
the effective Borel-Cantelli lemmas, effective martingales and other notions from algorithmic randomness.
We review these notions in Section \ref{D4zxFYtrR5} and present our main argument in 
Section \ref{3QK4gQEuwa}.

\section{Preliminaries}\label{D4zxFYtrR5}
Recall that the series $\sum_n n^{-{\epsilon}}$ converges if and only if $\epsilon>1$.
Recall also the Cauchy condensation series convergence
criterion (\eg see  \cite{conde}).
\begin{lem}[Condensation]\label{eq:condtest}
If $f:\mathbb{N} \rightarrow \mathbb{R}^{+}$ is  nonincreasing,  then the series
$\sum_n f(n)$ converges if and only if
the series $\sum_n \big(2^n\cdot f(2^n)\big)$ converges.
\end{lem}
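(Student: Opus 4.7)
The plan is to exploit monotonicity of $f$ by grouping the terms of $\sum_n f(n)$ into dyadic blocks and comparing each block to a single term of the condensed series. Partition the positive integers into blocks $B_k=\{2^k,2^k+1,\dots,2^{k+1}-1\}$, each of cardinality $2^k$. Since $f$ is nonincreasing, for every $n\in B_k$ we have $f(2^{k+1})\leq f(n)\leq f(2^k)$, so summing over $n\in B_k$ yields the two-sided estimate
\[
2^k\cdot f(2^{k+1})\ \leq\ \sum_{n\in B_k} f(n)\ \leq\ 2^k\cdot f(2^k).
\]

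From here both implications follow from the comparison test on series of nonnegative terms. For the direction \emph{condensed converges $\Rightarrow$ original converges}, sum the right-hand inequality over $k\geq 0$ to bound the tail of $\sum_{n\geq 1} f(n)$ by $\sum_{k\geq 0} 2^k f(2^k)$; the $n=0$ term is a single constant and does not affect convergence. For the direction \emph{original converges $\Rightarrow$ condensed converges}, sum the left-hand inequality over $k\geq 0$ to get $\sum_{k\geq 0} 2^k f(2^{k+1}) \leq \sum_{n\geq 1} f(n)$, and observe that $\sum_{k\geq 0} 2^k f(2^{k+1}) = \tfrac{1}{2}\sum_{k\geq 1} 2^k f(2^k)$, so the convergence of the shifted series is equivalent to the convergence of $\sum_k 2^k f(2^k)$ itself (again, altering by one term does not affect convergence).

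The result then follows by combining both directions. There is no serious obstacle: the only care needed is bookkeeping around the $n=0$ term (handled by noting that $f(0)$ is finite and a single term is irrelevant for convergence), and the reindexing in the lower bound where the block weight is $2^k f(2^{k+1})$ rather than $2^k f(2^k)$, which differs from the condensed series by the harmless factor of $2$ after shifting the index.
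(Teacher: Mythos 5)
Your proof is correct: the dyadic block decomposition, the two-sided estimate $2^k f(2^{k+1})\leq\sum_{n\in B_k}f(n)\leq 2^k f(2^k)$, and the reindexing that absorbs the factor of $2$ are all handled properly, and the treatment of the $n=0$ term is fine. Note that the paper offers no proof of this lemma at all --- it simply cites it as the classical Cauchy condensation test --- so there is nothing to compare against; your argument is the standard textbook proof and is exactly what the citation points to.
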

From these well known facts the following lemma immediately follows.  
\begin{lem}[Convergence and divergence] \label{QaQK5kYLLF} 
The sums
 $\sum_n 2^{-h_{\epsilon}(n)}$  and  $\sum_n 2^{-h^{\ast}_{\epsilon}(n)}$ are finite if and only if $\epsilon>1$. 
 \end{lem}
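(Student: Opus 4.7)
The plan is to unpack the definitions of $h_\epsilon$ and $h^{\ast}_\epsilon$ so that both sums become familiar series, and then invoke the $p$-series fact and the Condensation test stated just before the lemma.

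First, observe that, by Definition \ref{funcs},
\[
2^{-h_{\epsilon}(n)}=2^{-\epsilon\log n}=n^{-\epsilon},
\qquad
2^{-h^{\ast}_{\epsilon}(n)}=2^{-\log n-\epsilon\log\log n}=\frac{1}{n\,(\log n)^{\epsilon}}.
\]
For the first series, the finiteness of $\sum_n n^{-\epsilon}$ iff $\epsilon>1$ is exactly the standard $p$-series fact already recalled at the start of Section \ref{D4zxFYtrR5}, so there is nothing more to do.

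For the second series, I would apply Lemma \ref{eq:condtest} to the function $f(n)=1/(n(\log n)^{\epsilon})$, which is positive and nonincreasing once $n$ is large enough that $\log n>0$ (finitely many initial terms do not affect convergence, and these are in any case controlled by our convention $\log 0=0$). The condensed series is
\[
\sum_n 2^{n}\cdot f(2^{n}) \;=\; \sum_n 2^{n}\cdot\frac{1}{2^{n}\,(\log 2^{n})^{\epsilon}} \;=\; \sum_n \frac{1}{n^{\epsilon}},
\]
which, by the $p$-series fact again, is finite iff $\epsilon>1$. Hence $\sum_n 2^{-h^{\ast}_{\epsilon}(n)}$ is finite iff $\epsilon>1$, as required.

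There is no real obstacle; the only care needed is to start the sums from a sufficiently large $n$ so that $\log n>0$ and $f$ is genuinely nonincreasing, so that the Condensation criterion applies. Since altering finitely many terms of a nonnegative series does not change whether it converges, this is purely cosmetic.
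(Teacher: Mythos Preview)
Your argument is correct and is exactly the approach the paper intends: it simply states that the lemma follows immediately from the $p$-series fact and the Condensation Lemma, and your write-up fills in precisely those details.
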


Background on algorithmic randomness that is relevant to our argument can be found in
\cite[Chapter 6]{rodenisbook}. This monograph also contains a presentation of
the work in \cite{Solovay:75}.  We shall identify reals with their infinite binary expansions (the fact that dyadic rationals have two expansions will not cause issues). We shall generally work with reals in $[0,1]$, so that the decimal point may be neglected and reals can be thought of simply as infinite binary sequences, i.e.\ elements of Cantor space. It will be convenient to adopt the (slightly unusual) convention that the bits of a real $\alpha$ are indexed from 1 rather than zero, so that $\alpha = \alpha(1)\alpha(2)\alpha(3)\cdots$, rather than $\alpha(0) \alpha(1)\alpha(2) \cdots$.

A real is \ml random  if it avoids all effective statistical tests.
This notion was introduced by \ml in \cite{MR0223179}.
We will make use of an essentially equivalent notion of statistical test due to Solovay \cite{Solovay:75}:  a Solovay test is a
computable sequence of finite strings $(\sigma_i)$ (each $\sigma_i$ often being identified with the set of infinite binary sequences extending it, meaning that it may be regarded as a basic open subset of Cantor space)
such that $\sum_i 2^{-|\sigma_i|}$ is bounded. 
We say that a real avoids this test if there are only finitely many $i$ such that $\sigma_i$ is
a prefix of the binary expansion of the real. Solovay showed that a real is \ml random if and only if it avoids all Solovay tests.
An equivalent definition of \ml randomness can be given in terms of 
betting strategies, which are often expressed as martingales.
We shall think of  martingales as functions 
$f: 2^{<\omega}\to\mathbb{R}^{\geq 0}$
with the property $f(\sigma 0)+f(\sigma 1)= 2\cdot f(\sigma)$. A martingale $f$ 
is computably enumerable (c.e.)
if the values  $f(\sigma)$ can be computably and uniformly approximated by rationals from below, i.e.\ there exists a computable function $f^{\ast}(\sigma,i)$ taking rational values, which is nondecreasing in the second argument, and such that for all $\sigma$, $\mbox{lim}_{i\rightarrow \infty} f^{\ast}(\sigma,i)=f(\sigma)$. 
We say that $f$ succeeds on a real $X$ if $\lim_n f(\alpha\restr_n)=\infty$.
It is well known that a real $X$ is \ml random if and only if no \ce martingale succeeds on $X$.
We let $\alpha \restr_n$ denote the first $n$ bits of $\alpha$. Given a real $\alpha$, suppose that there exists  a partial computable function $p$ such that 
 $p(\alpha\restr_n)\downarrow $ for infinitely many $n$, and such that 
whenever $p(\alpha\restr_n)\downarrow $ we have $p(\alpha\restr_n)=\alpha(n+1)$, i.e.\ $p$ correctly predicts the next bit of $\alpha$ (recall our labelling convention above). In this case it is not hard to see that there exists a c.e.\ martingale which succeeds on $\alpha$, so that $\alpha$ cannot be Martin-L\"{o}f random. This fact will be used in the proof of Theorem
\ref{uVwUvgKQIJ}.  A real is weakly 1-random if it is not a member of any  null \pz class.

Finally, we state the effective Borell-Cantelli lemmas that are often used in order to derive statistical properties
of algorithmically random numbers.
Recall the basic fact from analysis that,
given a sequence $(b_i)$ of positive integers, we have: 
\begin{equation}\label{6m2fJHPIDt}
\prod_i (1-2^{-b_i})>0\iff \sum_i 2^{-b_i}<\infty.
\end{equation}
This is a direct consequence of the fact that
$\log(1+x)=x+\bigo{x^2}$ in a neighborhood of zero.

Given a finite set $B$ of natural numbers, a string $\sigma$ of length $|B|$ (\ie the cardinality of $B$) and a real $\beta$, 
we may say that   $\beta$ {\em meets}
 $\sigma$ {\em on} $B$  if the following holds for all $n< |B|$: 
if $m_n$ is the $n$th element of $B$
 we have $\beta(m_n)=\sigma (n)$.
 The same definition applies for the case when $\beta$ is a string of length at least the largest element of $B$.
Note that if the $B_i$ are disjoint sets (and fixing the uniform probability measure), the events `$\beta$ meets $\sigma_i$ on $B_i$' are independent.
We can therefore state the effective Borel-Cantelli lemmas as follows.
\begin{lem}[Effective Borel-Cantelli lemmas]\label{le:combina2}
Let 
 $(B_i)$ be a collection of pairwise
disjoint sets, let $b_i=|B_i|$ 
for each $i$ and
suppose that $(\sigma_i)$ is a
sequence of strings  
with $|\sigma_i|=|B_i|$ .
\begin{enumerate}
\item If $\sum_i 2^{-b_i}<\infty$ then the reals that meet $\sigma_i$ on $B_i$ for only finitely many $i$ 
form a class of measure 1.
\item If $\sum_i 2^{-b_i}=\infty$ then the reals that meet $\sigma_i$ on $B_i$  for only finitely many $i$ 
form a class of measure 0.
\end{enumerate}
Suppose the sequence  $(B_i)$ is computable. Then in the first case every \ml random real  
meets $\sigma_i$ on $B_i$  for only finitely many $i$,
and in the second case  every weakly 1-random real  meets
$\sigma_i$ on $B_i$ for infinitely many $i$. 
\end{lem}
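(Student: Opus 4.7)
The plan is to first verify the measure-theoretic statements (1) and (2) via the classical Borel-Cantelli lemmas, and then to derive the effective consequences by presenting the relevant sets as standard objects from algorithmic randomness: a Solovay test in the convergent case, and a countable union of null \pz classes in the divergent case. The crucial observation is that, since the $B_i$ are pairwise disjoint, the events $E_i := \{\beta : \beta \text{ meets } \sigma_i \text{ on } B_i\}$ depend on disjoint sets of coordinates, hence are mutually independent under the uniform measure on Cantor space; trivially $\mu(E_i)=2^{-b_i}$.

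With this setup, item (1) is the convergent Borel-Cantelli lemma applied to the sequence $(E_i)$: if $\sum_i 2^{-b_i}<\infty$, then $\sum_i\mu(E_i)<\infty$ and so $\mu(\limsup_i E_i)=0$, which means that almost every real lies in $E_i$ for only finitely many $i$. Item (2) is the divergent direction of Borel-Cantelli: if $\sum_i 2^{-b_i}=\infty$, then independence of the $E_i$ yields $\mu(\limsup_i E_i)=1$, so almost every real lies in $E_i$ for infinitely many $i$.

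For the effective half, assume that $(B_i,\sigma_i)$ is uniformly computable. In case (1), each $E_i$ is a clopen set, presentable effectively as the union of $2^{m_i-b_i}$ basic cylinders $[\tau]$ of length $m_i:=1+\max B_i$, namely those $\tau$ that meet $\sigma_i$ on $B_i$. Enumerating all these cylinders over all $i$ produces a computable sequence of strings whose total weight equals $\sum_i 2^{-b_i}<\infty$, hence a Solovay test. By Solovay's characterisation, every \ml random real belongs to only finitely many such cylinders, and therefore to only finitely many $E_i$. In case (2), the set of reals meeting $\sigma_i$ on $B_i$ for only finitely many $i$ can be written as $\bigcup_k F_k$, where $F_k:=\bigcap_{i\geq k}(E_i)^{c}$. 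Each $F_k$ is an effectively given intersection of clopen sets, hence a \pz class, and has measure zero by item (2). Since every weakly $1$-random real avoids every null \pz class, no such real lies in any $F_k$, so every weakly $1$-random real meets $\sigma_i$ on $B_i$ for infinitely many $i$.

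There is no substantial obstacle here: the whole argument reduces to the classical Borel-Cantelli lemmas combined with the standard definitions of a Solovay test and of weak $1$-randomness. The only point that genuinely requires care is that the divergent direction really does rely on independence of the events $E_i$, which is exactly what the pairwise disjointness of the $B_i$ buys us; and that the effective enumeration of the Solovay test (respectively, the \pz class) uses uniform computability of the pairs $(B_i,\sigma_i)$, which is implicit in the computability hypothesis on $(B_i)$ together with the given specification of the $\sigma_i$.
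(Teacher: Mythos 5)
Your proof is correct and follows essentially the same route as the paper: the convergent case is handled by packaging the events as a Solovay test, and the divergent case by independence of the events (coming from disjointness of the $B_i$) together with the observation that the set of reals meeting only finitely many $\sigma_i$ on $B_i$ decomposes into null \pz classes that no weakly 1-random real can inhabit. The only cosmetic difference is that the paper proves the divergent Borel--Cantelli direction explicitly via the product $\prod_i(1-2^{-b_i})$ rather than citing the classical lemma.
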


\noindent The first clause is essentially just Solovay's characterization of \ml randomness in terms of Solovay tests that we discussed above.  
For the sake of completeness we include a short proof of the second clause. 
Let $m_n=\max \cup_{i\leq n} B_i$.
For a given $n$,
the number of subsets of $\{0, \dots, m_n\}$, regarded as strings of length $m_n+1$, which do not meet
$\sigma_i$ on $B_i$ for any $i\leq n$ is:
\begin{equation}\label{eq:combh2hd}
2^{m_n+1-\sum_{i\leq n} b_i} \cdot\prod_{i=0}^{n} (2^{b_i}-1).  
\end{equation}
Since there are $2^{m_n+1}$ subsets of $\{0,\dots, m_n\}$, the measure of reals that
do not meet any of the sets $B_i, i\leq n$ is exactly
the expression in \eqref{eq:combh2hd} divided by $2^{m_n+1}$, \ie
\[
2^{-\sum_{i\leq n} b_i} \cdot\prod_{i=0}^{n} (2^{b_i}-1)=
\prod_{i=0}^{n} (1-2^{-b_i}). 
\]
By \eqref{6m2fJHPIDt}, the above quantity tends to zero if and only if the sum
$\sum_i 2^{-b_i}$
diverges. For each finite set $D$, the sum $\sum_{i\notin D}2^{-b_i}$ still diverges. So the argument above suffices to show that the reals which meet some $B_i$ with $i\notin D$ are of measure 1. Taking the intersection over all finite D, we get a countable intersection of sets of measure 1, which is therefore of measure 1. 
The effective version of the second clause follows since for each $n$, 
the set of reals which meet $\sigma_i$ on $B_i$ for at most $n$ many $i$, forms a null \pz class.

The Borel-Cantelli lemmas were used by Chaitin in \cite{chaitinincomp} in order to establish the 
existence of certain blocks of zeros in the binary expansion of  $\Omega$.
For example, it was shown that if $g$ is computable and $\sum_n 2^{-g(n)}$ diverges,
then for infinitely many $n$ there exists a block of $n+g(n)$ zeros between digits
$2^n$ and $2^{n+1}$ of the binary expansion of $\Omega$.

\section{Upper bounds on the oracle use in computations from omega numbers}
We prove Theorem \ref{uVwUvgKQIJ}, along with some slightly more 
general statements.
In Section \ref{B1WbTtcepf} we consider the more general question of
characterising the computable functions that are upper bounds on the oracle use
in computations  of one halting probability from another one.

\subsection{Proof of Theorem \ref{uVwUvgKQIJ}}\label{3QK4gQEuwa}
We start with the first part of Therorem \ref{uVwUvgKQIJ}, which relies on the
approximation properties of omega numbers.
The limits of increasing 
computable sequences of rational numbers are known as left computably
enumerable (\lcen) reals, and can be viewed as the halting probabilities of
(not necessarily universal) \pf machines. 

\begin{lem}[Sufficient redundancy]\label{F4dKJufHui}
Suppose that $g$ is a computable function such that
$\sum_i 2^{-g(i)}$ converges.
Given any two omega numbers,  
each is computable from the other with redundancy $g$.
\end{lem}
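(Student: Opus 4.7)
The plan is to exploit Solovay's classical result that any two \lce \ml random reals are Solovay equivalent. Applied to $\alpha:=\Omega_U$ and $\beta:=\Omega_V$, this yields a constant $d\in\Nat$ and a partial computable function $\phi:\Rat\to\Rat$ such that for every rational $q<\alpha$ we have $\phi(q)<\beta$ and $\beta-\phi(q)\leq 2^{d}(\alpha-q)$. Given $\alpha$ as oracle, on input $n$ the algorithm reads $\sigma:=\alpha\restr_{n+g(n)}$, forms the rational $q:=0.\sigma$, computes $p:=\phi(q)$, and outputs the binary expansion of $\lfloor p\cdot 2^{n}\rfloor$ as an $n$-bit string. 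Since $\alpha-q<2^{-(n+g(n))}$ we obtain $0<\beta-p<2^{-(n+L_n)}$ with $L_n:=g(n)-d$, and the oracle use is exactly $n+g(n)$.

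Correctness reduces to showing that for all but finitely many $n$ one has $\lfloor p\cdot 2^n\rfloor=\lfloor\beta\cdot 2^n\rfloor$. An elementary case analysis (using $p<\beta<p+2^{-(n+L_n)}$) shows that this equality can fail only when some integer multiple of $2^{-n}$ lies in the interval $(p,\beta]$, which forces $\beta\in[k/2^n,\,k/2^n+2^{-(n+L_n)})$ for $k=\lfloor\beta\cdot 2^n\rfloor$; equivalently, $\beta$ must have $L_n$ consecutive zeros starting at bit $n+1$. The family of basic open sets $\{[\tau 0^{L_n}]:n\in\Nat,\ \tau\in\{0,1\}^n\}$ has total measure $\sum_{n}2^{n}\cdot 2^{-(n+L_n)}=2^{d}\sum_{n}2^{-g(n)}<\infty$ by hypothesis, and therefore forms a Solovay test. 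Since $\beta$ is \ml random it meets only finitely many of these open sets, so the algorithm succeeds on cofinitely many $n$; the finitely many exceptional values of $n$ can be hard-coded into the program, using zero oracle bits on those inputs.

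The only real technical subtlety is the rounding analysis that pinpoints the single failure mode of the bit-reading step as a long-run-of-zeros event in $\beta$; everything else is a direct appeal to the standard machinery recalled in Section~\ref{D4zxFYtrR5}, namely Solovay equivalence of halting probabilities and the Solovay-test characterization of \ml randomness. Crucially, the convergence hypothesis $\sum_n 2^{-g(n)}<\infty$ is preserved under additive constant shifts inside the exponent, which is exactly what lets us absorb the Solovay-reducibility constant $d$ and achieve redundancy $g$ itself rather than $g+\bigo{1}$.
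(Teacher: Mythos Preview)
Your proof is correct and takes a genuinely different route from the paper's. The paper works directly with monotone rational approximations $(\alpha_s)$, $(\Omega_s)$: whenever $n$ is the least bit at which $\alpha_{s}$ and $\alpha_{s+1}$ differ, it enumerates $\Omega_{s+1}\restr_{\lfloor n+g(n)\rfloor}$ into a Solovay test on the \emph{oracle} $\Omega$; since any given $n$ can be that least bit at most $2^{n-1}$ times in a monotone approximation, the total weight is bounded by $\sum_n 2^{-g(n)}$, and randomness of $\Omega$ then guarantees that past some stage every change in $\alpha\restr_n$ is followed by a later change in $\Omega\restr_{\lfloor n+g(n)\rfloor}$. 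You instead invoke Solovay reducibility as a black box and place the Solovay test on the \emph{target} $\beta$, ruling out long runs of zeros there. Both tests have weight essentially $\sum_n 2^{-g(n)}$, so the convergence hypothesis enters at the same point. Your argument is more modular, but the paper's direct approach buys a little more: as the paper notes immediately after its proof, it actually shows that \emph{any} \lce real---not just an omega number---is computable from any omega number with redundancy $g$. Your rounding step uses \ml randomness of $\beta$ in an essential way, so that generalisation does not fall out of your method.
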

\begin{proof}
Let $g$ be as in the statement, let $\Omega$ be an omega number and let 
$\alpha$ be a \lcern. It suffices to show that $\alpha$ is computable from $\Omega$
with redundancy $g$. Let $(\alpha_s), (\Omega_s)$ be computable nondecreasing 
dyadic rational approximations that converge to $\alpha,\Omega$ respectively.
Recall that a Solovay test is a computable sequence of basic open intervals $(\sigma_i)$
such that $\sum_i 2^{-|\sigma_i|}$ is bounded above.
Since $\Omega$ is \ml random, it has only finitely many initial segments in any  Solovay test $(\sigma_i)$. 
We construct a Solovay test as follows, along with a \ce set $I$. 
At each stage $s+1$ we consider the least
$n\leq s$ such that $\alpha_s(n)\neq \alpha_{s+1}(n)$, if such exists. If such an $n$ exists,  we define
$\sigma_{s}=\Omega_{s+1}\restr_{\lfloor n+g(n)\rfloor }$ and enumerate $s$ into  $I$.
First let us verify that the set of strings $\sigma_s, s\in I$ is a Solovay test. Note that for every $n$,
the number of stages $s$ such that $n$ is the least number with the property that
 $\alpha_s(n)\neq \alpha_{s+1}(n)$, is bounded above by 
 the number of times that $\alpha_s(n)$ can change from
0 to 1 in the monotone approximation to $\alpha$. Hence this number is bounded above
by $2^{n-1}$. So we have: 
\[
\sum_{s\in I} 2^{-|\sigma_s|}\leq 
\sum_n 2^n\cdot 2^{-g(n)-n}=\sum_n 2^{-g(n)}<\infty.
\]
Since $\Omega$ is \ml random, there exists some $s_0$ such that for $s>s_0$ in $I$, $\sigma_s$ is not 
an initial segment of  $\Omega$. This means that whenever our construction enumerates
$s$ in $I$ because we find some least  $n$ with $\alpha_s(n)\neq \alpha_{s+1}(n)$, 
there exists some later stage where the approximation to $\Omega\restr_{\lfloor n+g(n)\rfloor}$ changes. So with oracle $\Omega\restr_{s+g(s)}$ we can uniformly compute $\alpha(n)$, and 
$\alpha$ is computable from $\Omega$ with redundancy $g$.
\end{proof}
The reader may note that the above proof establishes a slightly more general statement, regarding
the computation of any \lce real from an omega number.
The second part of Theorem \ref{uVwUvgKQIJ} is also established slightly more generally than stated,
as the following lemma indicates. For this proof recall that, by
Demuth \cite{Demuth:75ce},
the sum of a \ml random \lce real and any other \lce real is \ml random.
Since the halting probabilities of universal \pf machines are exactly the \ml random \lce reals, it follows 
that the sum of an omega number and any \lce real is an omega number.


\begin{lem}[Insufficient redundancy]\label{23akY6Ldyu}
Let $g$ be a computable nondecreasing function and let $(t_i)$
be a computable increasing sequence such that $t_i+g(t_i)<t_{i+1}$ for all sufficiently large $i$ and: 
\begin{equation}\label{SJlX3zHJot}
\sum_i 2^{-g(t_i)}=\infty
\hspace{0.7cm}\textrm{and}\hspace{0.7cm}
\sum_i 2^{t_i-t_{i+1}}<\infty.
\end{equation}
Then given any omega number $\Omega$ there exists another omega number which is
not computable from $\Omega$ with redundancy $g$.
\end{lem}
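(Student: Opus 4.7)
The plan is to construct a left-c.e.\ real $\alpha\in[0,1/2)$ so that $\beta=\Omega/2+\alpha$ is a left-c.e.\ real in $[0,1)$ which, by the Demuth-type theorem cited above, is \ml random and hence an omega number. It then suffices to arrange that $\beta$ is not computable from $\Omega$ with redundancy $g$; to this end we associate to every partial computable Turing functional $\Phi_e$ a requirement $R_e$ asserting that $\Phi_e$ does not witness a redundancy-$g$ computation of $\beta$ from $\Omega$. Using the divergence hypothesis $\sum_i 2^{-g(t_i)}=\infty$, we partition $\mathbb{N}$ into infinite computable pairwise disjoint sets $(I_e)_{e}$ with $\sum_{i\in I_e}2^{-g(t_i)}=\infty$ for each $e$, assigning the indices $i\in I_e$ as potential witnesses for $R_e$.

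The diagonalisation strategy for $(e,i)$ tries to force $\beta(t_i)\neq\Phi_e^{\Omega\restr_{t_i+g(t_i)}}(t_i)$. At a stage $s$ at which the approximation $\Phi_{e,s}^{\Omega_s\restr_{t_i+g(t_i)}}(t_i){\downarrow}=b$ and the current $\beta_s(t_i)=b$, we add a carefully chosen increment to $\alpha$ concentrated in the block $B_i=(t_i,t_{i+1}]$ whose effect is to flip $\beta(t_i)$ to $1-b$. Concretely, filling $B_i$ with ones in $\alpha$ contributes weight $2^{-t_i}-2^{-t_{i+1}}$ and triggers a carry from $B_i$ into position $t_i$, provided $\Omega$ has at least one $1$ in $B_i$. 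By the first (convergence) clause of the effective Borel--Cantelli lemma applied to the disjoint blocks $(B_i)$---with $|B_i|=t_{i+1}-t_i$ and $\sum_i 2^{-|B_i|}=\sum_i 2^{t_i-t_{i+1}}<\infty$---the random real $\Omega$ has a $1$ in $B_i$ for all but finitely many $i$, so the carry mechanism is reliably available.

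To verify that every $R_e$ is met, we apply the second (divergence) clause of the effective Borel--Cantelli lemma to the shorter disjoint blocks $(t_i,t_i+g(t_i)]$ for $i\in I_e$: since $\sum_{i\in I_e}2^{-g(t_i)}=\infty$ and $\Omega$ is \ml (hence weakly 1-) random, $\Omega$ meets any prescribed computable sequence of strings on these blocks for infinitely many $i\in I_e$. Combined with a Solovay-test accounting in the spirit of the proof of Lemma~\ref{F4dKJufHui}, which bounds the total weight of ``wasted'' diagonalisation actions (those whose $\Omega_s$-approximation later drifts below position $t_i+g(t_i)$), this shows that infinitely many of the diagonalisation actions at $(e,i)$ are permanent and therefore refute $\Phi_e$.

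The principal obstacle is the bookkeeping of the interplay between the left-c.e.\ approximations to $\Omega$ and to $\alpha$ and the carries in the binary expansion of $\beta=\Omega/2+\alpha$. Changes to $\Omega_s$ after we act can invalidate a diagonalisation, and carries may propagate between blocks; the role of the hypothesis $\sum_i 2^{t_i-t_{i+1}}<\infty$ is precisely to render these disturbances summable so that the total cost of wasted actions and stray carries is absorbed, while the divergence $\sum_i 2^{-g(t_i)}=\infty$ supplies the unbounded stock of witnesses per requirement that the Borel--Cantelli lemma converts into permanent success.
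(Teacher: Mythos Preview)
Your approach is genuinely different from the paper's, and as written it has real gaps.

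\textbf{What the paper does.} The paper avoids any functional-by-functional diagonalisation. It simply sets $\beta=\Omega+\sum_{i>c}2^{-(t_i+\lfloor g(t_i)\rfloor+1)}$ for a suitable constant $c$, i.e.\ it adds a \emph{fixed computable} real to $\Omega$. The divergence $\sum_i 2^{-g(t_i)}=\infty$ together with the second Borel--Cantelli clause is used only on $\Omega$: for infinitely many $k$ the digits of $\Omega$ on the interval $[t_k,t_k+\lfloor g(t_k)\rfloor]$ are all $1$. For each such ``valid'' $k$ (and using the convergence clause for the blocks $J_k$), the carry arithmetic gives a fixed equivalence $\beta(t_k)=0\iff \Omega(t_k+\lfloor g(t_k)\rfloor+1)=1$. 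Hence \emph{any} Turing reduction of $\beta$ from $\Omega$ with redundancy $g$ would yield a partial computable predictor that, infinitely often, determines the next bit of $\Omega$ from the previous ones---contradicting \ml randomness. No priority, no bookkeeping of wasted actions, no interaction between requirements.

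\textbf{Where your sketch falls short.} Your construction builds a left-c.e.\ $\alpha$ dynamically against each $\Phi_e$, flipping $\beta(t_i)$ by filling the block $B_i=(t_i,t_{i+1}]$ in $\alpha$. But this mechanism is one-shot: once $B_i$ is full of ones, you cannot act at $i$ again. After you act at stage $s$, later growth of $\Omega$ (even at positions beyond $t_i+g(t_i)$) or later additions to $\alpha$ for indices $j>i$ can push a carry through position $t_i$ and undo the flip, without disturbing the oracle bits used by $\Phi_e$; you offer no argument that this happens only finitely often per requirement. Your appeal to a Solovay test ``in the spirit of Lemma~\ref{F4dKJufHui}'' points in the wrong direction: that test certifies that only finitely many of the enumerated $\Omega_s\restr_{t_i+g(t_i)}$ are \emph{correct}, which is the opposite of what you need (you need infinitely many acted-upon $i$ where the oracle approximation \emph{was} correct and no later carry spoils the outcome). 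Your invocation of the second Borel--Cantelli clause on the blocks $(t_i,t_i+g(t_i)]$ is left without a stated target sequence of strings or any link to permanence of actions; as written it does no work. Finally, keeping $\alpha<1/2$ while adding quantities of order $2^{-t_i}$ at possibly many $i$ also needs justification.

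In short, you are attempting an active priority construction where a passive, single computable perturbation suffices. The missing idea is that one can arrange, via a fixed computable summand, that $\beta(t_k)$ \emph{deterministically encodes} $\Omega(t_k+\lfloor g(t_k)\rfloor+1)$ for infinitely many $k$; this turns any putative redundancy-$g$ reduction into a prediction rule for $\Omega$ and finishes the proof in one stroke.
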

\begin{proof}
We will show that for some constant $c$ the following number has the required property:  
\begin{equation}\label{HfHfPGgk7}
\beta =\Omega +\sum_{i>c} 2^{-(t_i+\lfloor g(t_i) \rfloor+1)}.
\end{equation}
First, note that $\beta$ as defined above is an omega number, since it is 
the sum of an omega number and a computable real.
Consider the intervals of  positions $I_k=[t_k,t_k+\lfloor g(t_k) \rfloor]$ and 
$J_k=[t_k+\lfloor g(t_k) \rfloor+2, t_{k+1}+ \lfloor g(t_k) \rfloor+1]$.
Given a real number $X$, we are interested in 
those  $k$ such that: 
\begin{enumerate}[\hspace{0.7cm}(a)$_k$]
\item the binary digits of $X$ at all positions in the interval 
$I_k$ are  1.
\item some digit of $X$ in the interval $J_k$ is zero.
\end{enumerate}


The properties (a)$_k$ and (b)$_k$ are effective, in the sense that the set of reals satisfying
them is a finite union of basic open sets, which are uniformly computable in $k$.
Note that,  since  $t_k+g(t_k)<t_{k+1}$ for all sufficiently large $k$, 
 the properties (a)$_k$ are independent\footnote{as events in the probability space of all reals, where
 the event corresponding to the property is the set of all reals that
satisfy this property.} 
 for all sufficiently large $k$. Since $g$ is nondecreasing, the same is true of the properties (b)$_k$.
The measure of reals that meet property
(a)$_k$ is $2^{-\lfloor g(t_k) \rfloor -1}$. Also, the measure of reals that do not meet property (b)$_k$
is $2^{t_k-t_{k+1}}$. Hence, by the effective Borel-Cantelli lemma: 
\begin{equation}
\parbox{13cm}{For any \ml random real there exist infinitely many $k$ such that
(a)$_k$ holds and finitely many $k$ such that (b)$_k$ does not hold.}
\end{equation}

Now, given $\Omega$, let $c$ be a number such that for all $k\geq c$, $t_k+g(t_k)<t_{k+1}$ and 
the property (b)$_k$ is met by $\Omega$.   Let $\beta$ be defined as in (\ref{HfHfPGgk7}) for that value of $c$. Define $d_k=t_k+\lfloor g(t_k) \rfloor +1$.
Let us say that {\em $k$ is valid} if
it is larger than $c$ and (a)$_k$ holds for $\Omega$. 
Then the following holds:  
\begin{equation}\label{eq:reducpfrop}
\parbox{13cm}{If $k$ is valid then $\Omega(d_k)=1\iff\beta(t_k)=0$.}
\end{equation}
In order to see this, note first that satisfaction of $(b)_{k'}$ for all $k'\geq k$ (where $k>c$) 
means that $\beta$ agrees with $\Omega+2^{-d_k}$ on all digits in the interval 
\[
\Big[t_{k-1}+\lfloor g(t_{k-1})\rfloor +2, t_{k}+\lfloor g(t_k)\rfloor+1\Big). 
\]
First suppose that $\Omega(d_k)=0$. In this case, adding $2^{-d_k}$ to $\Omega$ (as one term in the sum $\sum_{i>c} 2^{-(t_i+\lfloor g(t_i) \rfloor+1)}$) will cause it to change at position $d_k$ but leave it unchanged at position $t_k$, meaning that  $\beta(t_k)=1$. Suppose, on the other hand that  
$\Omega(d_k)=1$. Let $j$ be the greatest $\leq d_k$ such that $\Omega(j)=0$, so that $j\in [t_{k-1}+\lfloor g(t_{k-1}) \rfloor+2, t_{k}-1]$ because $(a)_k$ is satisfied as well as $(b)_{k-1}$. The addition of $2^{-d_k}$ to $\Omega$ will cause the digit at position $j$ to become 1, while making the digit at position $t_k$ into a 0. Thus $\beta(t_k)=0$ in this case.

If $\Omega$ computes $\beta$ with redundancy $g$, then 
for each $k$, computing $\beta(t_k)$ uses at most  the first
$\lfloor t_k+g(t_k)\rfloor $ bits of $\Omega$. 
Then \eqref{eq:reducpfrop} establishes that for the special case where $k$ is valid,
the first  $\lfloor t_k+g(t_k)\rfloor $ bits of $\Omega$ are enough to decide  
$\Omega(t_k+\lfloor g(t_k) \rfloor+1)$. This shows that there is a partial computable
prediction rule for the digits of $\Omega$. In other words, there is a c.e.\  martingale
that succeeds on $\Omega$,  contradicting the fact that $\Omega$ is \ml random.
\end{proof}
It remains to show that a sequence $(t_i)$ as in the hypothesis of
Lemma \ref{23akY6Ldyu} exists for the function $\log (n)$. 

\begin{lem}[Existence of partition]\label{b4eX9dfKaQ}
If $g(n)=\log (n)$,  there exists a computable increasing sequence
$(t_i)$ such that $t_i +g(t_i)<t_{i+1}$ for all sufficiently large $i$ and such that \eqref{SJlX3zHJot} holds.
\end{lem}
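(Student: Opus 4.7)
My plan is to define $t_i$ to grow roughly like $i\log i\log\log i$. Concretely, set $t_i=\lceil i\log i\log\log i\rceil$ for $i\geq 4$, and choose any strictly increasing initial values for smaller $i$. The heuristic is that the successive differences will then be of order $\log i\log\log i$, which is visibly larger than $\log t_i$ (of order $\log i$), so the structural condition $t_i+\log t_i<t_{i+1}$ is easy to meet. More importantly, this growth rate lies exactly at the boundary where $\sum 2^{t_i-t_{i+1}}$ converges while $\sum 1/t_i = \sum 2^{-g(t_i)}$ still diverges.

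The first step is to obtain uniform bounds on both $t_i$ (to control $\sum 1/t_i$ from below) and on $t_{i+1}-t_i$ (to control $\sum 2^{t_i-t_{i+1}}$ from above and to verify $t_i+\log t_i<t_{i+1}$). A clean way is to differentiate $f(x)=x\log x\log\log x$, observe that $f'(x)\geq\log x\log\log x$ for large $x$, and apply the mean value theorem, obtaining $t_{i+1}-t_i\geq \log i\log\log i - O(1)$; a crude direct estimate gives $t_i\leq 2\,i\log i\log\log i$.

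Given these bounds, the three verifications are short. Since $\log t_i=\log i+O(\log\log i)$ while $t_{i+1}-t_i$ is of order $\log i\log\log i$, the condition $t_i+\log t_i<t_{i+1}$ holds for all sufficiently large $i$. For convergence, $2^{t_i-t_{i+1}}$ is bounded by a constant multiple of $i^{-\log\log i}$, which is eventually smaller than $1/i^2$. For divergence, the bound on $t_i$ reduces $\sum 1/t_i$ to a constant multiple of $\sum 1/(n\log n\log\log n)$, and two applications of the Cauchy condensation test (Lemma~\ref{eq:condtest}) bring this down to the harmonic series: first to $\sum 1/(n\log n)$, then to $\sum 1/n$.

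The main delicate point, rather than any single calculation, is the calibration of the growth rate. If $t_i$ grew only like $i\log i$, the gaps would be of order $\log i$ and $\sum 2^{t_i-t_{i+1}}$ would diverge; if $t_i$ grew like $i(\log i)^2$ or faster, $\sum 1/t_i$ would already converge by a single application of condensation. Inserting the extra $\log\log i$ factor in the definition of $t_i$ is exactly what separates these two regimes and lands us on the correct side of the Borel--Cantelli dichotomy exploited in Lemma~\ref{23akY6Ldyu}.
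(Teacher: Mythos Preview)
Your proposal is correct and close in spirit to the paper's proof, but with a different choice of sequence that is worth noting.

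The paper defines $t_k$ additively by prescribing the differences: $t_k=\sum_{i=1}^k(\log i+2\log\log i)$, so that $t_{k+1}-t_k=\log(k+1)+2\log\log(k+1)$ exactly. This makes two of the three verifications immediate from Lemma~\ref{QaQK5kYLLF}: convergence of $\sum 2^{t_i-t_{i+1}}=\sum 2^{-h^{\ast}_2(i)}$ is direct, and since $t_k\leq 2k\log k$ one gets $g(t_k)\leq 1+\log k+\log\log k$, which both shows $g(t_k)<t_{k+1}-t_k$ eventually and gives divergence of $\sum 2^{-g(t_k)}$ via $\sum 2^{-h^{\ast}_1(k)}=\infty$. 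No mean-value estimate is needed.

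Your closed-form choice $t_i=\lceil i\log i\log\log i\rceil$ also works, at the cost of invoking the mean value theorem to extract a lower bound on the gaps. Note, incidentally, that your calibration remark is a little pessimistic: the paper's sequence has $t_k\sim k\log k$, which you suggested would fail. What fails is defining $t_i=\lceil i\log i\rceil$ directly, since then the gaps are only $\log i+O(1)$; the paper avoids this by controlling the differences rather than the values, slipping in the extra $2\log\log i$ term there. Your route sidesteps this subtlety by choosing a more generous growth rate for $t_i$ itself, which is perfectly fine and arguably more robust, if slightly less sharp.
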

\begin{proof}
For $k\geq 1$ define:  \[ t_k=  \sum_{i=1}^k (\log (i) +2\log\log (i)). \]  
Then $t_{k+1}-t_k=\log (k+1) +2\log\log (k+1)$ so the
second
clause of \eqref{SJlX3zHJot} holds by Lemma \ref{QaQK5kYLLF}. Next we show that for all sufficiently large $k$, $t_k+g(t_k)<t_{k+1}$. 
Since  $\log (i) +2\cdot\log\log (i) \leq 2 \cdot\log (i)$ we have: 
\[
t_k\leq 2 \cdot \sum_{i\leq k} \log (i)
= 2\cdot\log \Big((k)!\Big)
\leq 2k\cdot\log (k).
\]
 Hence:
\begin{equation}\label{wimtOBNGQI}
g(t_k)\leq g(2k\cdot \log (k))=1+\log (k) +\log\log (k).  
\end{equation}
For all sufficiently large $k$ the last expression is bounded above by $\log (k+1) +2\cdot\log\log (k+1)$.
Hence $g(t_k)< t_{k+1}-t_k$ for all sufficiently large $k$, as promised.  Moreover, by \eqref{wimtOBNGQI} and Lemma \ref{QaQK5kYLLF},
the first clause of
\eqref{SJlX3zHJot} holds for the sequence $(t_i)$, which concludes the proof of this lemma.
\end{proof}

These lemmas conclude the proof of Theorem  \ref{uVwUvgKQIJ}.

\subsection{Computable functions as upper bounds of oracle use of omega}\label{B1WbTtcepf}
Theorem \ref{uVwUvgKQIJ} gives a rather precise picture of the rate of growth
of the oracle use in computations of one halting probability from another. 
It is reasonable to ask if we can obtain a more general characterisation of the 
upper bounds on the oracle use in such computations.
Lemma \ref{F4dKJufHui}, for example, suggests that this might be possible. Any computable function
$g$ such that $\sum_i 2^{-g(i)}$ converges is such an upper bound. Could this condition characterise
these upper bounds? In other words, is it true that if 
$\sum_i 2^{-g(i)}$ diverges then
there are two omega numbers such that one is not computable from the other with redundancy $g$?
If for all computable $g$ such that $\sum_i 2^{-g(i)}$ diverges, there existed a sequence $(t_i)$  satisfying the conditions described in  Lemma \ref{23akY6Ldyu},  a positive answer would follow.
The following proposition establishes that this is not the case, even if $g$ is assumed to be monotone.

\begin{prop}\label{vWF7Tcg5Pj}
There exists a nondecreasing computable function $g$ such that $\sum_i 2^{-g(i)}=\infty$ and the following holds for all increasing sequences of positive integers $(t_i)$: 
\begin{equation}\label{ueUtC1kCTt}
\sum_i 2^{-g(t_i)}=\infty\Rightarrow
\sum_i 2^{t_i-t_{i+1}}=\infty.
\end{equation}
\end{prop}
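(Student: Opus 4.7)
My plan is to build $g$ as a nondecreasing computable step function, with block lengths and heights tuned to exploit the following Jensen-type bound. If a block $[L_{k-1},L_k)$ of length $D_k$ contains $n_k$ terms of an increasing sequence $(t_i)$ of positive integers, then the $n_k-1$ internal gaps $b_i=t_{i+1}-t_i$ sum to strictly less than $D_k$, so convexity of $x\mapsto 2^{-x}$ gives
\[
(n_k-1)\cdot 2^{-D_k/(n_k-1)}\;\leq\;\sum_{b_i\text{ internal to block }k} 2^{-b_i}\;\leq\; S:=\sum_i 2^{t_i-t_{i+1}}.
\]
Whenever $S<\infty$, taking logs and iterating the resulting inequality $m\log(m/S)\leq D_k$ (with $m=n_k-1$) forces $n_k=O(D_k/\log D_k)$ for all large $k$, with the implicit constant depending only on $S$. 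The idea is then to choose the block heights $g_k$ so that $D_k\cdot 2^{-g_k}$ is non-summable but $(D_k/\log D_k)\cdot 2^{-g_k}$ is summable: then $\sum_n 2^{-g(n)}=\infty$ by direct computation, while the inequality above forces $\sum_i 2^{-g(t_i)}<\infty$ whenever $S<\infty$, which is the contrapositive of \eqref{ueUtC1kCTt}.

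Concretely, I take $L_0=0$, $L_k=L_{k-1}+2^k$, and $g(n)=g_k:=k+\lfloor\log k\rfloor$ for $n\in[L_{k-1},L_k)$. This $g$ is clearly nondecreasing and computable, and summing blockwise,
\[
\sum_n 2^{-g(n)}\;=\;\sum_k 2^k\cdot 2^{-k-\lfloor\log k\rfloor}\;=\;\sum_k 2^{-\lfloor\log k\rfloor}\;\geq\;\sum_k 1/k\;=\;\infty.
\]
For any $(t_i)$ with $S<\infty$, the bound above gives $n_k=O(2^k/k)$ for all sufficiently large $k$ (the finitely many small blocks contribute only $O(1)$, using the trivial bound $n_k\leq D_k$), and therefore
\[
\sum_i 2^{-g(t_i)}\;=\;\sum_k n_k\cdot 2^{-g_k}\;=\;O\!\left(\sum_k \frac{1}{k^2}\right)\;<\;\infty,
\]
as required.

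The main obstacle is the parameter tuning: the logarithmic slack $1/\log D_k$ provided by Jensen's inequality is all that separates $D_k\cdot 2^{-g_k}$ from $(D_k/\log D_k)\cdot 2^{-g_k}$, and one must choose $D_k$ and $g_k$ in lockstep to straddle the boundary of summability. The choice $D_k=2^k$ is natural because $\log D_k=k$ matches the leading-order growth of $g_k$, leaving exactly the harmonic-vs-quadratic-harmonic gap ($\sum 1/k=\infty$ versus $\sum 1/k^2<\infty$) required. Once the parameters are fixed, the remaining work --- verifying monotonicity and computability of $g$, computing the two sums, and solving $m\cdot 2^{-D_k/m}\leq S$ for $m$ --- is routine.
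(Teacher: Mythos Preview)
Your proof is correct. Both you and the paper build $g$ as a nondecreasing step function on consecutive blocks and exploit the same convexity bound (your Jensen inequality is exactly the paper's Lemma~\ref{MHKEk5bFl3}): if a block of length $D$ contains $m+1$ terms of $(t_i)$, then the internal gap contribution to $\sum_i 2^{t_i-t_{i+1}}$ is at least $m\cdot 2^{-D/m}$.

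The difference lies in the direction of the argument and the resulting parameters. The paper argues \emph{directly}: assuming $\sum_i 2^{-g(t_i)}=\infty$, it finds infinitely many blocks $I_n$ in which at least a $2^{-n}$ fraction of positions are hit, and chooses doubly exponential block lengths $m_n=2^{2^n+n}$ so that each such block contributes at least $1/4$ to $\sum_i 2^{t_i-t_{i+1}}$. You argue by \emph{contrapositive}: assuming $S=\sum_i 2^{t_i-t_{i+1}}<\infty$, you solve $m\cdot 2^{-D_k/m}\le S$ to get the uniform bound $n_k=O(D_k/\log D_k)$, and then tune the block heights so that the resulting sum $\sum_k n_k\cdot 2^{-g_k}$ converges. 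Because your bound on $n_k$ holds for \emph{every} block rather than merely infinitely many, you can work with singly exponential blocks $D_k=2^k$ and the much more modest heights $g_k=k+\lfloor\log k\rfloor$; the harmonic-versus-$1/k^2$ gap does the rest. The paper's argument, by contrast, only controls a sparse set of ``dense'' blocks and must pay for this with doubly exponential growth. Your construction is therefore somewhat sharper and the bookkeeping cleaner, though the underlying mechanism is the same.
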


For the proof of Proposition \ref{vWF7Tcg5Pj}, 
we construct $g$ of the form $\mbox{log}(f(n))$. Let $m_i=2^{2^i+i}$
and  define $f(n)=m_1$ for $m_1$ 
many values of $n$ (i.e.\ the interval $[1,m_1]$), 
then $f(n)=m_2$ for the next $m_2$ values of $n$, and so on.
This determines $f$ and $g$, so it remains to show that $g$ has the desired property.
Define the intervals $I_n=\{i\ |\ f(i)=m_n\}$ and note that by the definition of $f$ we have
$|I_n|=m_n$.
Moreover, these intervals are a partition of $\Nat$.
In order to understand the intuition behind the definition of $f$, imagine momentarily that the sequence $(m_i)$ has not yet been specified,  and let $f$ be defined as above with respect to some sequence $(m_i)$ which is yet to be determined. Through various considerations we shall arrive at the specific
definition of $(m_i)$ given previously, along with a proof that this particular choice confers the desired properties on $f$.
Since 
$g(n)=\mbox{log}(f(n))$, if $f(n)=m$, this will contribute $1/m$ to the sum $\sum_n 2^{-g(n)}$. 
So those values of $n$ for which $f(n)=m_1$ contribute (in total) 1 to the sum $\sum_n 2^{-g(n)}$. 
Then those values of $n$ for which $f(n)=m_2$ contribute the same amount again, and so on. 
This ensures that $\sum_n 2^{-g(n)}$ is infinite, for any  choice of $(m_i)$. 
So it remains to specify $(m_i)$ so that \eqref{ueUtC1kCTt} is met.

Given any increasing sequence $(t_n)$ such that 
$\sum_n 2^{-g(t_n)}=\infty$, we wish to show that $\sum_n 2^{t_n-t_{n+1}} =\infty$. 
For each $n$, consider the set $J_n(t)=|\{i\ |\ t_i\in I_n\}|$,
(where the suffix $(t)$ indicates the dependence on $(t_i)$). Since $|I_n|=m_n$ we have:
\[
\sum_i 2^{-g(t_i)}=\sum_i \frac{1}{f(t_i)}=
\sum_n \frac{|J_n(t)|}{m_n}=\sum_n \frac{|J_n(t)|}{|I_n|}.
\]
Since we are given that 
  $\sum_i 2^{-g(t_i)}=\infty$, it follows that there are infinitely many $n$ with
$|J_n(t)|> |I_n|\cdot 2^{-n}$. 
Let $D(t)$ be the set of such $n$, 
where once again the suffix $(t)$ indicates the dependence on $(t_i)$,
\ie $D(t)=\{n\ |\ |J_n(t)|> |I_n|\cdot 2^{-n}\}$.
We aim to show that, so long as we specify the sequence $(m_i)$ appropriately: 
\begin{equation} \label{ret} 
\textrm{if \hspace{0.4cm}$n\in D(t)$ \hspace{0.4cm}then}
\hspace{0.4cm}
\sum_{i\in J_n(t)} 2^{t_i-t_{i+1}} \geq 2^{-2}.
\end{equation} 
Since
$\sum_i 2^{-g(t_i)}=\infty$
implies that
$D(t)$ is infinite,
this suffices to give the result because:
\begin{equation}\label{2XM9dQOfhx}
\sum_{i} 2^{t_i-t_{i+1}} \geq 
\sum_{n\in D(t)}\left(\sum_{i\in J_n(t)} 2^{t_i-t_{i+1}}\right).
\end{equation}
The rough idea is that, 
for an adversary who wishes to keep the sum in \eqref{ret} small, 
the optimal approach is to 
ensure that the terms $t_i$, $i\in J_n(t)$
are spaced
evenly in $I_n$. This rough idea is formalised in the following lemma.

\begin{lem}[Minimizing the sums]\label{MHKEk5bFl3}
Let $I$ be an interval, $k>1$, and suppose that $t_i$, $1\leq i\leq k+1$ is an increasing sequence of
numbers in $I$. Then $\sum_{1\leq i\leq k} 2^{-(t_{i+1}-t_i)}\geq k\cdot 2^{-\ceil{|I|/k}}\geq k\cdot 2^{-|I|/k-1}$.
\end{lem}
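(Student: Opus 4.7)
The plan is to reformulate the sum in terms of the gaps $d_i := t_{i+1} - t_i$ for $1 \leq i \leq k$, and then apply the AM-GM inequality (equivalently, the convexity of $x \mapsto 2^{-x}$). The two elementary facts to record up front are that each $d_i > 0$ (since the $t_i$ are strictly increasing), and that
\[
\sum_{i=1}^{k} d_i \;=\; t_{k+1} - t_1 \;\leq\; |I|,
\]
simply because $t_1$ and $t_{k+1}$ both lie in $I$.

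The main step is to apply AM-GM to the $k$ positive reals $2^{-d_1}, \ldots, 2^{-d_k}$. This yields
\[
\frac{1}{k} \sum_{i=1}^{k} 2^{-d_i} \;\geq\; \Bigl( \prod_{i=1}^{k} 2^{-d_i} \Bigr)^{1/k} \;=\; 2^{-\frac{1}{k} \sum_{i=1}^{k} d_i} \;\geq\; 2^{-|I|/k} \;\geq\; 2^{-\lceil |I|/k \rceil},
\]
where the last inequality uses $|I|/k \leq \lceil |I|/k \rceil$. Multiplying through by $k$ gives the first stated inequality $\sum_i 2^{-d_i} \geq k \cdot 2^{-\lceil |I|/k \rceil}$. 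The second inequality $k \cdot 2^{-\lceil |I|/k \rceil} \geq k \cdot 2^{-|I|/k - 1}$ then follows at once from the bound $\lceil |I|/k \rceil \leq |I|/k + 1$.

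There is essentially no obstacle: the content of the lemma is exactly that $2^{-x}$ is strictly convex, so an adversary attempting to minimize $\sum 2^{-d_i}$ subject to $\sum d_i \leq |I|$ does best by equalizing the gaps at size $|I|/k$, and this extremal configuration already gives $k \cdot 2^{-|I|/k}$. The only minor care needed is in passing from $|I|/k$ to the integer ceiling $\lceil |I|/k \rceil$ (to accommodate the possibility that the gaps must be integers in the intended application), and this is handled by the two trivial comparisons above.
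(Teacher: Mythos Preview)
Your proof is correct and takes a genuinely different route from the paper. You reduce the problem to a single application of the AM--GM inequality (equivalently, Jensen's inequality for the convex function $x\mapsto 2^{-x}$): writing $d_i=t_{i+1}-t_i$ with $\sum_i d_i\leq |I|$, you obtain directly $\frac{1}{k}\sum_i 2^{-d_i}\geq 2^{-\frac{1}{k}\sum_i d_i}\geq 2^{-|I|/k}$, which in fact gives the slightly sharper bound $\sum_i 2^{-d_i}\geq k\cdot 2^{-|I|/k}$ before rounding to the ceiling. The paper instead first normalises so that $t_1,t_{k+1}$ sit at the endpoints of $I$, then works combinatorially over integer gap vectors $(a_1,\dots,a_k)$ with $\sum_i a_i=m$, proving by induction on $m$ that the minimum of $\sum_i 2^{-a_i}$ is attained when all $a_i\in\{\lfloor m/k\rfloor,\lceil m/k\rceil\}$. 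Your argument is shorter, requires no integrality assumption on the $t_i$, and makes the convexity content of the lemma explicit; the paper's argument, while longer, identifies the exact discrete minimiser rather than merely bounding the minimum.
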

\begin{proof}
The second inequality above is obvious, so we only need to prove the first one. 
Let us consider the $t_i$ as movable markers. We can assume that the first marker $t_1$ is placed on the first element of the interval $I$, and that $t_{k+1}$ is placed on the last element of the interval, since otherwise they can be moved there reducing the sum   $\sum_{1\leq i\leq k} 2^{-(t_{i+1}-t_i)}$. 
Let $c_i=t_{i+1}-t_i$ and let $m=t_{k+1}-t_1$, so that $\sum_{i=1}^k c_i =m$. Consider $\Pi_k^m$, which is the set of all sequences $a_1,a_2,\dots, a_k$ such that every $a_i\in \mathbb{N}^+$, and  $\sum_{i=1}^k a_i =m $.  The result follows if we can establish that amongst all sequences in $\Pi_k^m$, the minimum value of $\sum_i 2^{-a_i}$ is attained whenever $a_i \in \{ \lfloor  m/k \rfloor,  \lceil  m/k \rceil \}$ for all $i$.  We show this by induction on $m\geq k$. For $m=k$ the result follows easily, since we must have $a_i=1$ for all $i$. Suppose the result holds for $m$. Let $\hat \Pi_k^m$ be all those sequences in $\Pi_k^m$  such that $a_i \in \{ \lfloor  m/k \rfloor,  \lceil  m/k \rceil \}$ for all $i$. It is clear that for all sequences in $\hat \Pi_k^m$ the sum   $\sum_i 2^{-a_i}$ is the same (and similarly for $\hat \Pi_k^{m+1}$). Now suppose there exist two sequences $(a_i)$ and $(b_i)$ in $\Pi_k^{m+1}$ such that  $(a_i)\in \hat \Pi_k^{m+1}$,   $(b_i)\notin \hat \Pi_k^{m+1}$ and for which $\sum_i 2^{-b_i} <\sum_i 2^{-a_i}$. Let $c$ be such that $b_c$ is largest, and let $d$ be such that $a_d$ is largest. Then we have $b_c\geq a_d$. Let $(a_i^{\ast})$ be the element of $\hat \Pi_k^m$ obtained by replacing $a_d$ by $a_d-1$. Let $(b_i^{\ast})$ be the element of $\Pi_k^m$ obtained by replacing $b_c$ by $b_c-1$.
 Then we have  $\sum_i 2^{-b_i^{\ast}} <\sum_i 2^{-a_i^{\ast}}$, contradicting the induction hypothesis.
\end{proof} 

Since $|I_n|=m_n$ and $|J_n(t)|\geq 2^{-n}\cdot |I_n|$ for each $n\in D(t)$,
by Lemma \ref{MHKEk5bFl3} we have that for each $n\in D(t)$,
\begin{equation}\label{FjlB39Kcrp}
\sum_{i\in J_n(t)} 2^{t_i-t_{i+1}} \geq
(|J_n|-1)\cdot 2^{-|I_n|/|J_n(t)|-1}\geq
(2^{-n}\cdot m_n-1) \cdot 2^{-2^{n}-1}\geq
(m_n-1)\cdot 2^{-2^{n}-n-1}.
\end{equation}
So if we define $m_n=2^{2^n+n}$ (as previously), 
we get the required inequality \eqref{ret}. 
We summarize the argument. We already noted that for 
any choice of $m_i$, the corresponding function $g$
satisfies $\sum_n 2^{-g(n)}=\infty$. Now fix $m_i=2^{2^i+i}$ and assume that 
$\sum_i 2^{-g(t_i)}=\infty$ for some increasing sequence $(t_i)$.
 By Lemma \ref{MHKEk5bFl3}
 we get \eqref{FjlB39Kcrp}. So by \eqref{2XM9dQOfhx}
 we have that $\sum_{i} 2^{t_i-t_{i+1}}=\infty$ which concludes the proof.
 
\section{Concluding remarks}
We have characterised the redundancy growth rate which is generally required in computations of 
halting probabilities from other halting probabilities, in terms of the functions $h_{\epsilon}(n)$.  
It would be pleasing to obtain a more general characterisation of the required redundancy in such 
computations, in terms of arbitrary computable nondecreasing functions $g$ such that $\sum_n 2^{-g(n)}$
converges. 
Although our analysis applies to this generalised goal with respect to the upper bounds
that we obtain (Lemma \ref{F4dKJufHui}), Proposition \ref{vWF7Tcg5Pj}
indicates that our lower bound analysis (Lemma \ref{23akY6Ldyu}) 
may not be sufficient for such a generalisation.


\begin{thebibliography}{CHKW01}

\bibitem[B06]{conde}
Daniel~D. Bonar and Michael J. Khoury~Jr. ~.
\newblock {\em Real Infinite Series}.
\newblock Classroom resource materials. Mathematical Association of America,
  USA, 2006.

\bibitem[BD12]{Barmpalias3488}
George Barmpalias and David~L. Dowe.
\newblock Universality probability of a prefix-free machine.
\newblock {\em Philosophical Transactions of the Royal Society of London A:
  Mathematical, Physical and Engineering Sciences}, 370(1971):3488--3511, 2012.

\bibitem[BDG10]{IOPORT.05678491}
George Barmpalias, Rodney Downey, and Noam Greenberg.
\newblock Working with strong reducibilities above totally $\omega $-c.e. and
  array computable degrees.
\newblock {\em Transactions of the American Mathematical Society},
  362(2):777--813, 2010.

\bibitem[Cha75]{MR0411829}
Gregory~J. Chaitin.
\newblock A theory of program size formally identical to information theory.
\newblock {\em J. Assoc. Comput. Mach.}, 22:329--340, 1975.

\bibitem[Cha87]{chaitinincomp}
Gregory~J. Chaitin.
\newblock Incompleteness theorems for random reals.
\newblock {\em Advances in Applied Mathematics}, 8:119--146, 1987.

\bibitem[CHKW01]{Calude.Hertling.ea:01}
Christian. Calude, Peter. Hertling, Bakhadyr Khoussainov, and Yongge Wang.
\newblock Recursively enumerable reals and {C}haitin {$\Omega$} numbers.
\newblock {\em Theoret. Comput. Sci.}, 255(1-2):125--149, 2001.

\bibitem[CN97]{Calude.Nies:nd}
Cristian Calude and Andr\'{e} Nies.
\newblock Chaitin {$\Omega$} numbers and strong reducibilities.
\newblock {\em J.UCS}, 3(11):1162--1166, 1997.

\bibitem[Dem75]{Demuth:75ce}
Oswald Demuth.
\newblock On constructive pseudonumbers.
\newblock {\em Comment. Math. Univ. Carolin.}, 16(2):315--331, 1975.
\newblock In Russian.

\bibitem[DH10]{rodenisbook}
Rod~G. Downey and Denis Hirshfeldt.
\newblock {\em Algorithmic Randomness and Complexity}.
\newblock Springer, 2010.

\bibitem[DHMN05]{MR2188515}
Rod Downey, Denis~R. Hirschfeldt, Joseph~S. Miller, and Andr{\'e} Nies.
\newblock Relativizing {C}haitin's halting probability.
\newblock {\em J. Math. Log.}, 5(2):167--192, 2005.

\bibitem[FSW06]{jc/FigueiraSW06}
Santiago Figueira, Frank Stephan, and Guohua Wu.
\newblock Randomness and universal machines.
\newblock {\em J. Complexity}, 22(6):738--751, 2006.

\bibitem[KS01]{Kucera.Slaman:01}
Anton{\'{\i}}n Ku{\v{c}}era and Theodore Slaman.
\newblock Randomness and recursive enumerability.
\newblock {\em SIAM J. Comput.}, 31(1):199--211, 2001.

\bibitem[ML66]{MR0223179}
Per Martin-L{\"o}f.
\newblock The definition of random sequences.
\newblock {\em Information and Control}, 9:602--619, 1966.

\bibitem[Sol75]{Solovay:75}
Robert Solovay.
\newblock Handwritten manuscript related to {C}haitin's work.
\newblock IBM Thomas J. Watson Research Center, Yorktown Heights, NY, 215
  pages, 1975.

\bibitem[Tad09]{Tadaki:2009:C9}
Kohtaro Tadaki.
\newblock Chaitin {$\Omega$} numbers and halting problems.
\newblock In {\em Proceedings of the 5th Conference on Computability in Europe:
  Mathematical Theory and Computational Practice}, CiE '09, pages 447--456,
  Berlin, Heidelberg, 2009. Springer-Verlag.

\end{thebibliography}
\end{document}